\numberwithin{equation}{section}
\newtheorem{theorem}{Theorem}[section]
\newtheorem{lemma}{Lemma}[section]	
\newtheorem{proposition}{Proposition}[section]
\newtheorem {remark} {Remark}
\begin{document} 

    \begin{frontmatter} 
        
        \title{Chern-Simons Type Characteristic Classes \\of  Abelian Lattice Gauge Theory} 
        
        \author{Mengyao Wu\corref{cor1}}
        \ead{mengyao_wu_w@163.com}
        \author{Jie Yang}
        \ead{5972@cnu.edu.cn}
        \cortext[cor1]{Corresponding author.}
        \address{School of Mathematical Sciences, Capital Normal University, Beijing 100048, China}
        \date{}

		\begin{abstract} 
   In this paper, we extend the definition of the Chern-Simons type characteristic classes in the continuous case to Abelian lattice gauge theory. Then, we show that the exterior differential of the $k$-th Chern-Simons type characteristic class is exactly equal to the coboundary of the cochain of the $(k-1)$-th Chern-Simons type characteristic class based upon the noncommutative differential calculus on the lattice.

\end{abstract}

\begin{keyword}
   noncommutative differential calculus, Abelian lattice gauge theory, descent equations
\end{keyword}	
\end{frontmatter}


	\section{Introduction}
Chern-Simons secondary characteristic classes have proven to be powerful tools in studying the global differential geometry of manifolds with boundary \cite{chern1974characteristic}. In 1985, Guo et al. \cite{han1985chern} generalized the characteristic polynomials $P(F^r)$ to a sequence of new characteristic classes, called Chern-Simons type characteristic classes. They explored the relationships between these classes and derived the descent (or zig-zag) equations. Subsequently, Guo et al. \cite{han1985anomalies} investigated the applications of non-Abelian anomalies and the gauge-invariant Wess-Zumino effective action \cite{chou2009gauge,wess1971consequences, witten1983global} by combining the cohomology groups of gauge transformations with the Chern-Simons type characteristic classes. From then on, descent equations have been widely used in the literature \cite{Chou_1985, Chou_1984, chou1984symmetric} to learn the anomaly. In light of its importance, later studies have sought to generalize descent equations \cite{kang2018descent,alekseev2018chern,izaurieta2015chern,izaurieta2017chern,danhua2024higher}.
   
 In the continuous gauge theory, the Chern-Simons type characteristic classes are constructed from $k+1$ Lie algebra $\mathcal{g}$-valued 1-forms $A_i,i \in \{0,\cdots, k\}$ and a $k$-simplex set $\Delta_k$. With the  interpolation $A_{0,t_1 \cdots t_k}=A_0+\sum\limits_{i=1}^{k} t_i (A_i-A_0)=A_0+\sum\limits_{i=1}^{k} t_i \eta^{i,0}$ and the corresponding curvature $F_{0,t_1 \cdots t_k}$, the associated $(2r-k)$-form $Q_r^{(k)}$ is defined as
     \begin{align} \label{Q-p}
         Q_r^{(k)}(A_0,\cdots,A_i,\cdots A_k;\Delta_k)=\frac{r!}{(r-k)!} \int_{\Delta_{k}} dt_1 \wedge \cdots \wedge dt_k P(\eta^{1,0},\cdots,\eta^{k,0},F_{0,t_1 \cdots t_k}^{r-k}),0\leq k \leq r,
     \end{align}
   where $P$ is a $G$-invariant form  (see Ref \cite{han1985chern}). Then, the descent equations are given as follows: 
   \begin{align}
   \Delta Q_r^{(k-1)}(A_0,\cdots,\hat{A_i},\cdots A_k;\partial \Delta_k)= dQ_r^{(k)}(A_0,\cdots,A_i,\cdots A_k;\Delta_k).
   \end{align}
 
Despite the significant achievements of Chern-Simons theory in the continuous case, there are several unclear issues within the continuous framework, such as the framing anomaly. However, these problems can be effectively addressed in the lattice gauge theory \cite{gromov2015framing}. In turn, the lattice gauge theory provides novel methods for resolving many of the ambiguities present in the continuous field theory. The lattice gauge theory performs calculations on a discretized space-time grid, replacing the continuous space-time background. Unlike traditional differential geometry methods in the continuous case, noncommutative differential geometry (NCDG) offers a more suitable framework for discrete situations. Several studies have explored the unique features of lattice gauge theory, such as the expression of the topological charge (Chern classes), the properties of chiral anomalies and the linking number on the lattice \cite{niedermayer1999exact,luscher1999topology, fujiwara2001application,zhang2022abelian}. Notably, T. Fujiwara et al. \cite{fujiwara2001application} derived the chiral anomaly in the Abelian lattice gauge theory by solving the descent equations and highlighted the topological significance of Chern classes within the framework of NCDG.
    
  Given that the descent equations play a crucial role in the continuous gauge theory, this paper aims to generalize the Chern-Simons type characteristic classes to the lattice settings and derive the descent equations for the lattice. This exploration helps in understanding the Chern-Simons theory and anomalies in the lattice context. The NCDG framework used in here is based on the simplest case of noncommutative differential calculus on discrete sets. 
 Based on the Abelian lattice gauge theory, we define a sequence of Chern-Simons type characteristic classes and demonstrate an important property: the exterior differential of the $k$-th Chern-Simons type characteristic class is exactly equal to the coboundary of the cochain of the $(k-1)$-th Chern-Simons type characteristic class on the lattice.
 
  The paper is structured as follows. In Section \ref{section 2},  we review the relevant topics in the Abelian lattice gauge theory and noncommutative differential calculus on hypercubic lattice. In Section \ref{section 3}, we define $G$-invariant form on the lattice and prove the exchange law for the differential forms on the lattice. In section \ref{section 4}, we generalize the definitions of the Chern-Simons type characteristic classes to the Abelian lattice gauge theory and  prove the validity of the descent equations for the lattice. In the continuous limit, the form reduces to the same form in \cite{han1985chern}.
  
\section{Noncommutative differential calculus in Abelian lattice gauge  theory} \label{section 2}
In this section, we review some useful definitions in the Abelian lattice gauge theory and some basic results. Then, we introduce the noncommutative differential calculus on the lattice.  For simplicity, we consider the lattice with unit spacing, i.e., $ h = 1$. More details can be found in \cite{dimakis1993noncommutative,fujiwara2000non,sulejmanpasic2019abelian,fujiwara2001topological}.

Let $P=P(M,G)$ be a principal $G$-bundle over a $D$-dimensional base manifold $M$, where $G$ is an Abelian Lie group. We discretize the manifold $M$ into unit cells $c(n)(n \in \mathbb{Z}^{D})$,
\begin{align}
    c(n)=\left\{x \in  R^D \vert 0 \leqslant(x_{\mu}-n_{\mu})\leqslant 1,  \forall \mu \in \{1,\cdots, D\} \right\}.
\end{align}

Denote the lattice by $\Lambda$, which serves as the discrete base space. For a node $ x \in \Lambda$, the fiber $ G_x = \pi^{-1} (x)$ is isomorphic to the structure group $G$. All of these fibers form a discrete principal bundle, denoted $ Q(\Lambda, G) $. 

Let $\mathcal{A}(\mathbb{Z}^D)$ be the algebra of functions defined on $\mathbb{Z}^D$. The forward difference operator $\Delta_{\mu}$ and the backward difference operator $\Delta_{\mu}^*$ correspond to the ordinary derivative $\partial_x$ on $\mathcal{A}(M)$, where $\mathcal{A}(M)$ is the algebra of functions defined on the manifold $M$.
\begin{align}
    \Delta_{\mu}f(n)=f(n+\hat{\mu})-f(n),\nonumber\\
    \Delta_{\mu}^* f(n)=f(n)-f(n-\hat{\mu}).
     \end{align}
 
The bases of the 1-form on a $D$-dimensional hypercubic lattice are $  dx_1,dx_2,\cdots,dx_D$, defined on the links and satisfy the Grassmann algebra:
 \begin{align}
     dx_{\mu} \wedge dx_{\nu}=-dx_{\nu} \wedge dx_{\mu}.
 \end{align}
We define a $k$-form ${\omega^{(n)}(x)}$ on the cell $c(n)$  i.e. $x \in c(n)$ as follows:
\begin{align}
    \omega^{(n)}(x)=\frac{1}{k!}\sum_{\mu_1,\cdots,\mu_k} \omega_{\mu_1,\cdots,\mu_k}^{(n)}(x)dx_{\mu_1}\wedge \cdots\wedge dx_{\mu_k},
\end{align}
where the  coefficient $\omega_{\mu_1,\cdots,\mu_k}^{(n)}(x) $ is an antisymmetric tensor of rank $k$. The exterior derivative operator $d$  is defined by the forward difference operator as:
\begin{align}
    d\omega^{(n)}(x)=\frac{1}{k!}\sum_{\mu,\mu_1,\mu_2,\cdots,\mu_k}\Delta_{\mu}
    \omega _{\mu_1,\mu_2,\cdots,\mu_k}^{(n)}(x) dx_{\mu}\wedge dx_{\mu_1}\wedge \cdots\wedge  dx_{\mu_k}.
\end{align}
$d$ is nilpotent, i.e., $d^2=0$.

The essence of noncommutative differential calculus \cite{connes1994noncommutative, hanying2000noncommutative} is the following
relation:
\begin{align}\label{NCDC}
    dx_{\mu}f(x)=f(x+\hat{\mu})dx_{\mu},
\end{align}
 where $f(x)$ is a function. Specifically, a function defined on the lattice does not commute trivially with the differential form; instead, the exchange results in a shift of the coordinate. This represents a fundamental distinction between the continuous and discrete cases. Notably, the key observation is that the Leibniz rule for the exterior derivative $d$ remains valid.
\begin{align}
    d[\omega^{(n)}(x) \theta ^{(n)}(x)]=d\omega^{(n)}(x)\theta ^{(n)}(x)+(-1)^{k}\omega^{(n)}(x)d\theta ^{(n)}(x),
\end{align}
where $\omega^{(n)}(x)$ is a $k$-form.

The discrete connection 1-form and the field strength 2-form are
\begin{align}
    A^{(n)}(x)&=\sum_{\mu}A_{\mu}^{(n)}(x)dx_{\mu},\\
    F^{(n)}(x)&=\frac{1}{2}\sum_{\mu,\nu} F^{(n)}_{\mu\nu}(x)dx_{\mu}dx_{\nu}=\frac{1}{2}\sum_{\mu,\nu} (\Delta_{\mu}A_{\nu}^{(n)}-\Delta_{\nu}A_{\mu}^{(n)})dx_{\mu}dx_{\nu}=dA^{(n)}(x).
\end{align}
In the continuous case, the connection 1-form takes value in a Lie algebra. Here, we consider the 1-form $A^{(n)}(x)$ is real valued. The Bianchi identity  for the difference discrete curvature is
\begin{align} \label{Bianchi}
    dF^{(n)}(x)=0 \Longleftrightarrow \sum_{\lambda,\rho,\sigma}\epsilon^{\lambda \rho \sigma}\Delta_{\lambda}F_{\rho,\sigma}^{(n)}(x)=0.
\end{align}

The gauge transformation of the discrete connection 1-form \cite{wu2006difference} is 
\begin{align}\label{a}
   A_{\mu}^{(n)}(x)\longmapsto A_{\mu}^{(n)}(x)-\Delta_{\mu} \lambda^{(n)}(x).
\end{align}
It is  easy to check  that the curvature
under the gauge transformation in \eqref{a} is simply gauge invariant:
\begin{align}
    F_{\mu\nu}^{(n)}(x) \longmapsto  F_{\mu\nu}^{(n)}(x).
\end{align}

\section{\texorpdfstring{$G$}{G}-invariant form on the lattice} \label{section 3}
Similar to the role of the $G$-invariant polynomial in the continuous descent equations, we define the $G$-invariant form in the Abelian lattice  gauge theory.
Assume $P$ is a $G$-invariant form of degree $r$, namely $P$ is a symmetric, multilinear, $G$-invariant map
\begin{align*}
    P:\underbrace{\mathcal{A}(Z^D) \times \cdots \times \mathcal{A}(Z^D)}_{r} \longrightarrow \mathbb{R},
\end{align*}
satisfying
\begin{align}
     P(\alpha_1,\cdots,\alpha_i,\cdots,\alpha_j,\cdots,\alpha_r)&=P(\alpha_1,\cdots,\alpha_j,\cdots,\alpha_i,\cdots,\alpha_r),
    \end{align}
\begin{align}\label{G-in}
    P(g \alpha_1 g^{-1},\cdots,g \alpha_r g^{-1})&=P( \alpha_1 ,\cdots, \alpha_r ),
\end{align}
for all $\alpha_1,\cdots,\alpha_r \in \mathcal{A}(Z^D)$, $g \in G $.

We generalize the definition of the action of $P$ to $\mathcal{A}(Z^D)^{d_1} \times \cdots \times \mathcal{A}(Z^D)^{d_r}$, in which $\mathcal{A}(Z^D)^{d_i}$ denotes the vector space of real valued $d_i$-forms on the lattice $\Lambda$.  For example, when $W_i^{(n)}$ are  $d_i$-forms on the cell $c(n)$, written as $W_i^{(n)}= w_{\mu_{i_1},\cdots,\mu_{i_{d_i}}}^{(n)} dx_{\mu_{i_1}}\wedge \cdots \wedge dx_{\mu_{i_{d_i}}}$ \footnote{To simplify the notation, we adopt the Einstein summation convention in this section, where summation is implied over every pair of repeated indices.}, we define
\begin{align}
    P(W_1,\cdots,W_r)&=P(w_{\mu_{1_1},\cdots,\mu_{1_{d_1}}}^{(n)},\cdots, w_{\mu_{r_1},\cdots,\mu_{r_{d_r}}}^{(n+\mu_{1_1}+\cdots+\mu_{1_{d_1}}+\cdots+\mu_{{(r-1)}_1}+\cdots+\mu_{{(r-1)}_{d_{r-1}}})})\times  \\\nonumber 
    &dx_{\mu_{1_1}}\wedge \cdots \wedge dx_{\mu_{1_{d_1}}} \wedge \cdots \wedge dx_{\mu_{r_1}}\wedge \cdots \wedge dx_{\mu_{r_{d_r}}}
\end{align}
and we have
\begin{align}
  P(g W_1^{(n)} g^{-1},\cdots,g  W_r^{(n)}  g^{-1})=P( W_1^{(n)},\cdots,  W_r^{(n)}),
\end{align}
by using \eqref{G-in}.

The following proposition provides the commutative law for real valued differential forms on the lattice, which corresponds to the commutative law for Lie algebra valued differential forms in the $G$-invariant polynomial $P$ in the continuous case.
\begin{proposition}
    Define a $k$-form on the cell $c(n)$ as $W_1^{(n)}=w_{1\mu_1 ,\cdots ,\mu_k}^{(n)}dx_{\mu_1} \cdots dx_{\mu_k}$, and an $l$-form on the cell $c(n)$ as $W_2^{(n)}=w_{2\nu_1 ,\cdots ,\nu_l}^{(n)} dx_{\nu_1} \cdots dx_{\nu_l}$. We have
   \begin{align}
       P(W_1^{(n)},W_2^{(n)})
      =(-1)^{kl}P(w_{2 \nu_1 ,\cdots ,\nu_l}^{(n+\mu_1 \cdots +\mu_k)} dx_{\nu_1} \wedge\cdots\wedge dx_{\nu_l} , w_{1 \mu_1 ,\cdots ,\mu_k}^{(n-\nu_1 -\cdots -\nu_l)}  dx_{\mu_1} \wedge \cdots \wedge dx_{\mu_k}).
    \end{align}
    \end{proposition}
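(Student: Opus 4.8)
The plan is to verify the identity by expanding both sides with the definition of $P$ on lattice forms and matching them. I would start from the left-hand side and apply the definition directly: since $W_1^{(n)}$ carries the differentials $dx_{\mu_1}\wedge\cdots\wedge dx_{\mu_k}$, its coefficient stays anchored at the node $n$, while by the noncommutative relation \eqref{NCDC} the coefficient of $W_2^{(n)}$ must be transported past these $k$ differentials and is therefore evaluated at the shifted node $n+\mu_1+\cdots+\mu_k$. This yields
\[
  P(W_1^{(n)},W_2^{(n)})=P\bigl(w_{1\mu_1,\cdots,\mu_k}^{(n)},\,w_{2\nu_1,\cdots,\nu_l}^{(n+\mu_1+\cdots+\mu_k)}\bigr)\,dx_{\mu_1}\wedge\cdots\wedge dx_{\mu_k}\wedge dx_{\nu_1}\wedge\cdots\wedge dx_{\nu_l}.
\]

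Next I would rewrite this using two elementary facts. First, the symmetry of $P$ as a multilinear map lets me interchange its two scalar arguments at no cost. Second, bringing the $\nu$-differentials in front of the $\mu$-differentials costs exactly $(-1)^{kl}$, since each of the $k$ factors $dx_{\mu_i}$ must be anticommuted past each of the $l$ factors $dx_{\nu_j}$. Applying both, the left-hand side becomes
\[
  (-1)^{kl}\,P\bigl(w_{2\nu_1,\cdots,\nu_l}^{(n+\mu_1+\cdots+\mu_k)},\,w_{1\mu_1,\cdots,\mu_k}^{(n)}\bigr)\,dx_{\nu_1}\wedge\cdots\wedge dx_{\nu_l}\wedge dx_{\mu_1}\wedge\cdots\wedge dx_{\mu_k}.
\]

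To close the argument I would recognize this expression as $(-1)^{kl}$ times the value of $P$ on the reversed pair of forms appearing in the statement. The decisive check is the coordinate bookkeeping: when the definition of $P$ is applied to that reversed pair, the coefficient $w_{1\mu_1,\cdots,\mu_k}^{(n-\nu_1-\cdots-\nu_l)}$ now sits in the second slot and must be shifted past the $l$ differentials $dx_{\nu_1}\wedge\cdots\wedge dx_{\nu_l}$ of the first form, which raises its node by $\nu_1+\cdots+\nu_l$ back to $n$. This is precisely why the base point $n-\nu_1-\cdots-\nu_l$ is chosen in the statement, and with this shift the two sides land on identical arguments of $P$.

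I expect the only genuine obstacle to be this coordinate bookkeeping forced by \eqref{NCDC}: one must track exactly which differentials each coefficient is carried across and confirm that the prescribed base points $n+\mu_1+\cdots+\mu_k$ and $n-\nu_1-\cdots-\nu_l$ are tuned so that both sides evaluate $P$ on the same pair of scalars. The remaining ingredients --- the symmetry of $P$ and the Grassmann sign $(-1)^{kl}$ --- are entirely routine.
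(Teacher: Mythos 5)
Your proposal is correct and follows essentially the same route as the paper's own proof: expand $P(W_1^{(n)},W_2^{(n)})$ via the definition with the coefficient of $W_2$ shifted to $n+\mu_1+\cdots+\mu_k$ by \eqref{NCDC}, swap the scalar arguments using the symmetry of $P$, pick up the Grassmann sign $(-1)^{kl}$ from reordering the differentials, and then read the result backwards through the definition, which forces the base point $n-\nu_1-\cdots-\nu_l$ so that the shift past the $\nu$-differentials returns it to $n$. Your coordinate bookkeeping in the final recognition step matches the paper's last equality exactly.
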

       \begin{proof}
          \begin{align*}
            &P(W_1^{(n)},W_2^{(n)})\\
       &=P(w_{1 \mu_1 ,\cdots ,\mu_k}^{(n)}dx_{\mu_1} \wedge \cdots \wedge dx_{\mu_k} ,w_{2 \nu_1 ,\cdots ,\nu_l}^{(n)}\wedge dx_{\nu_1} \wedge \cdots \wedge dx_{\nu_l}) \\
       &=P(w_{1 \mu_1 ,\cdots ,\mu_k}^{(n)}, w_{2 \nu_1 ,\cdots ,\nu_l}^{(n+\mu_1 \cdots +\mu_k)})dx_{\mu_1} \wedge \cdots \wedge dx_{\mu_k} \wedge  dx_{\nu_1}\wedge \cdots \wedge  dx_{\nu_l} \\
       &=(-1)^{kl}P(w_{2 \nu_1 ,\cdots ,\nu_l}^{(n+\mu_1 \cdots +\mu_k)}, w_{1 \mu_1 ,\cdots ,\mu_k}^{(n)}) dx_{\nu_1} \wedge\cdots \wedge dx_{\nu_l} \wedge dx_{\mu_1} \wedge \cdots \wedge dx_{\mu_k} \\
       &=(-1)^{kl}P(w_{2 \nu_1 ,\cdots ,\nu_l}^{(n+\mu_1 \cdots +\mu_k)} dx_{\nu_1} \wedge\cdots\wedge dx_{\nu_l} , w_{1 \mu_1 ,\cdots ,\mu_k}^{(n-\nu_1 -\cdots -\nu_l)}  dx_{\mu_1} \wedge \cdots \wedge dx_{\mu_k}) 
       \end{align*}
by using \eqref{NCDC}.
\end{proof}

 In contrast to the continuous case, this proposition demonstrates that the commutation of differential forms on the lattice, within the invariant polynomial $P$, is accompanied by a shift in the lattice positions. Therefore,
\begin{align}
    P( W_1^{(n)},\cdots,  W_i^{(n)},\cdots, W_j^{(n)},\cdots, W_r^{(n)} )&\neq (-1)^{(d_{i+1}+\cdots+d_{j-1})(d_i +d_j)+d_i d_j}\nonumber
    \\ &P( W_1^{(n)},\cdots,  W_j^{(n)},\cdots, W_i^{(n)},\cdots, W_r^{(n)}).
\end{align}

Given the properties of the lattice gauge theory, we emphasize that the positions of differential forms will not be arbitrarily altered in the subsequent proofs. This ensures that all operations remain confined within the lattice framework $c(n)$.

\section{Chern-Simons Type Characteristic Classes and Descent Equations on the Lattice} \label{section 4}
 In this section, we construct the Chern-Simons type characteristic classes and derive the descent equations based on the Abelian lattice gauge theory. The key issue in this process is the application of noncommutative differential calculus. 
 
 As shown in \eqref{Q-p} \cite{han1985chern}, the integration is performed over $k$-simplex set $\Delta_k$ consisting of $  0 \leq t_i \leq 1 (i=1,\cdots k),\sum \limits_{i=1}^{k} t_i \leq 1$ in the continuous case. In the Abelian gauge theory, we focus on the discretized $k$-simplex set, consisting of vertices and the edges linking them. For simplicity, we denote this discretized set as $\Delta_k$. 
 
 Given $k+1$ discrete connections $A_0,\cdots A_k$ on the discrete principal bundle $Q(\Lambda,G)$, we define the Chern-Simons type characteristic classes on the cell $c(n)$ as follows: 
 \begin{align} \label{Q}
     Q_r^{(k)}(A_0^{(n)},\cdots,A_i^{(n)},\cdots A_k^{(n)};\Delta_k)=\sum_{\sum\limits_{l=0}^{k} m_l=r-k}P(F_0^{m_0},\eta^{1,0},\cdots,F_{i-1}^{m_{i-1}},\eta^{i,i-1},F_i^{m_i},\cdots,\eta^{k,k-1},F_k^{m_k}).
 \end{align}
It is a $2r-k$-form, where $F_i=dA_i^{(n)}, \eta^{i,i-1}=A_i^{(n)}-A_{i-1}^{(n)}$, and $m_i \in \mathbb{N}$, with $0 \leq k \leq r.$ 
 
The following lemma provides a crucial step towards proving Lemma \ref{Lemma 4.2}.
\begin{lemma}\label{doc}
    \begin{align}\label{3.2}
        P(\eta^{i,j},\eta^{k,i})-P(\eta^{l,j},\eta^{k,l})=P(\eta^{i,j},\eta^{l,i})-P(\eta^{l,i},\eta^{k,l}).
    \end{align}
    \end{lemma}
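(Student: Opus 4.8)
The plan is to treat the degree-$r$ invariant form $P$, here evaluated on two $1$-form arguments, as a bilinear pairing on lattice $1$-forms and to reduce the claimed identity to the additive (cocycle) relations satisfied by the quantities $\eta^{a,b}=A_a^{(n)}-A_b^{(n)}$. The only algebraic facts I need are that $P$ is linear in each of the two displayed slots (an instance of its multilinearity) and that the $\eta$'s obey the additive relations inherited from $\eta^{a,b}=A_a^{(n)}-A_b^{(n)}$, all carried at the same base point $n$; for example $\eta^{k,i}-\eta^{l,i}=\eta^{k,l}$ and $\eta^{l,i}-\eta^{l,j}=-\eta^{i,j}$ as $1$-forms on $c(n)$.

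First I would move everything to one side and group the four terms as
\[
\bigl[P(\eta^{i,j},\eta^{k,i})-P(\eta^{i,j},\eta^{l,i})\bigr]+\bigl[P(\eta^{l,i},\eta^{k,l})-P(\eta^{l,j},\eta^{k,l})\bigr].
\]
Then I would apply linearity in the second slot to the first bracket and linearity in the first slot to the second bracket, obtaining $P(\eta^{i,j},\eta^{k,i}-\eta^{l,i})$ and $P(\eta^{l,i}-\eta^{l,j},\eta^{k,l})$ respectively. Substituting $\eta^{k,i}-\eta^{l,i}=A_k^{(n)}-A_l^{(n)}=\eta^{k,l}$ and $\eta^{l,i}-\eta^{l,j}=A_j^{(n)}-A_i^{(n)}=-\eta^{i,j}$, these collapse to $P(\eta^{i,j},\eta^{k,l})$ and $-P(\eta^{i,j},\eta^{k,l})$, which cancel, giving the identity.

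The point that needs care — and which I expect to be the main (essentially the only) obstacle — is justifying the linearity steps on the lattice. By the definition of $P$ acting on forms, commuting the $dx$'s past the coefficient functions via \eqref{NCDC} produces a coordinate shift in the second argument that depends only on the index of the first argument; hence, within a fixed slot the shift is identical for every summand, so addition inside that slot distributes without disturbing it. Since each grouping above combines two terms in a single slot and never swaps the two slots, no transposition of forms (and therefore no uncontrolled shift of the kind flagged in the Proposition) is invoked, in keeping with the remark at the end of Section \ref{section 3}. Once this is settled, the cancellation above is immediate.
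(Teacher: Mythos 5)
Your proof is correct and essentially the same as the paper's: the paper's one-line argument adds and subtracts the identical cross term $P(\eta^{i,j},\eta^{k,l})$ and then applies bilinearity in each slot together with the relations $\eta^{k,i}-\eta^{k,l}=\eta^{l,i}$ and $\eta^{i,j}-\eta^{l,j}=-\eta^{l,i}$, which is your computation read in the opposite direction. Your closing paragraph justifying that slot-wise additivity is compatible with the noncommutative coordinate shift makes explicit a point the paper subsumes under ``the multilinearity of $P$,'' but it does not change the route.
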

\begin{proof}
    \begin{align*}
      P(\eta^{i,j},\eta^{k,i})-P(\eta^{l,j},\eta^{k,l})
      &=P(\eta^{i,j},\eta^{k,i})-P(\eta^{i,j},\eta^{k,l})+P(\eta^{i,j},\eta^{k,l})-P(\eta^{l,j},\eta^{k,l})
     \\&=P(\eta^{i,j},\eta^{l,i})-P(\eta^{l,i},\eta^{k,l}),
 \end{align*}
by the multilinear of $P$.
\end{proof}

To simplify the proof of \eqref{THE}, we place the final step of the derivation into the proof of the subsequent lemma. Straightforward calculations prove that:
\begin{lemma}
    \label{Lemma 4.2}
   \begin{align*}
        &\sum\limits_{j=2}^{k}(-1)^{j-1}\sum_{\sum\limits_{l=0}^{k} m_l^{\widehat{j-2,j-1}}=r-k}P(F_0^{m_0},\cdots,F_{j-3}^{m_{j-3}},\eta^{j-2,j-3},\eta^{j-1,j-2},F_j^{m_j+1},\cdots,F_k^{m_k})
        \\&+\sum\limits_{i=1}^{k-1}(-1)^{i}\sum_{\sum\limits_{l=0}^{k} m_l^{\widehat{i,i+1}}=r-k}P(F_0^{m_0},\cdots,F_{i-1}^{m_{i-1}+1},\eta^{i+1,i},\eta^{i+2,i+1},F_{i+2}^{m_{i+2}},\cdots,F_k^{m_k})
        \\&=\sum_{\sum\limits_{l=0}^{k} m_l^{\widehat{0,1}}=r-k+1}P(\eta^{2,1},F_{2}^{m_{2}},\eta^{3,2},\cdots,F_k^{m_k})+(-1)^k\sum_{\sum\limits_{l=0}^{k} m_l^{\widehat{k-1,k}}=r-k+1} P(F_0^{m_0},\cdots,F_{k-2}^{m_{k-2}},\eta^{k-1,k-2})
        \\&+\sum\limits_{t=1}^{k-1}(-1)^{t}\sum_{\sum\limits_{l=0}^{k} m_l^{\widehat{t,t+1}}=r-k}P(F_0^{m_0},\cdots,F_{t-1}^{m_{t-1}+1},\eta^{t+1,t-1},\eta^{t+2,t+1},F_{t+2}^{m_{t+2}},\cdots,F_k^{m_k})
        \\&+\sum\limits_{s=1}^{k-1}(-1)^{s}\sum_{\sum\limits_{l=0}^{k} m_l^{\widehat{s-1,s}}=r-k+1}P(F_0^{m_0},\cdots,F_{s-2}^{m_{s-2}},\eta^{s-1,s-2},\eta^{s+1,s-1},F_{s+1}^{m_{s+1}},\cdots,F_k^{m_k}),
   \end{align*}
where $m_{l}^{\hat{i}}$ means $\mathop{m_{l}|}\nolimits_{l=i}=0$.

\end{lemma}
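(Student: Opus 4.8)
The plan is to regard every summand on both sides as the invariant $P$ applied to the standard alternating chain of curvatures $F_l^{m_l}$ and differences $\eta^{l,l-1}$, modified only locally, and to pass from the left-hand side to the right-hand side using nothing but the multilinearity of $P$ encoded in Lemma~\ref{doc}. The first point to record is that in each summand of the two left-hand sums the two $\eta$-factors are \emph{adjacent}: the curvature that would separate them carries a vanishing power ($m_{j-2}=0$ in the first sum, $m_{i+1}=0$ in the second, by the hat convention), so the pair $\eta^{\cdot,\cdot},\eta^{\cdot,\cdot}$ really sits in consecutive slots and Lemma~\ref{doc} may be applied to it. Because the Proposition of Section~\ref{section 3} forbids permuting form factors (their lattice positions would shift under \eqref{NCDC}), every step must act inside a fixed slot ordering, i.e.\ through multilinearity alone.

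First I would split each adjacent pair by writing one of its factors as a difference of two jumps via $\eta^{a,b}=\eta^{a,c}+\eta^{c,b}$, which is exactly the additivity of $\eta^{a,b}=A_a-A_b$ underlying \eqref{3.2}. In the second sum I take $\eta^{i+1,i}=\eta^{i+1,i-1}-\eta^{i,i-1}$; multilinearity then breaks each summand into a double-jump term carrying $\eta^{i+1,i-1},\eta^{i+2,i+1}$ and a remainder carrying $\eta^{i,i-1},\eta^{i+2,i+1}$. The double-jump term coincides slot-for-slot --- including the sign $(-1)^i$, the boosted power $m_{i-1}+1$, and the constraint $r-k$ --- with the $t$-summand of the right-hand side under $t=i$, so that sum is reproduced exactly. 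In the first sum I take $\eta^{j-1,j-2}=\eta^{j,j-2}-\eta^{j,j-1}$, splitting each summand into a double-jump term $\eta^{j-2,j-3},\eta^{j,j-2}$ and a remainder $\eta^{j-2,j-3},\eta^{j,j-1}$; the double-jump term reproduces the $s$-summand under $s=j-1$, with the one caveat that the first sum carries a boosted $F_j^{m_j+1}$ whereas the $s$-sum runs over all powers of $F_j$, so the first sum only produces the part of the $s$-sum in which $F_j$ is present.

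It then remains to identify the two remainders, together with the ``$F_j$-absent'' deficit of the $s$-sum, with the two isolated boundary terms. Here I would reindex the remainders onto a common running index so that the boosted curvature and the two hatted slots line up, and show that consecutive remainders cancel in pairs: the difference structure produced by Lemma~\ref{doc} is precisely what makes the sum telescoping. The lower end of the telescope, where the chain has lost $F_0,F_1$ and begins at $\eta^{2,1}$, survives as the first right-hand term with coefficient $+1$; the upper end, where the chain has lost $F_{k-1},F_k$ and terminates at $\eta^{k-1,k-2}$, survives as the $(-1)^k$ term. Throughout, the signs $(-1)^{j-1}$, $(-1)^i$ and the shift of the summation constraint from $r-k$ to $r-k+1$ (caused by the boosted powers) must be carried along so that the surviving pieces appear with unit multiplicity and the correct sign.

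The conceptual content is light --- it is only Lemma~\ref{doc} and the additivity of the $\eta$'s --- so the main obstacle is organizational. The delicate step is the secondary telescoping: one must choose the reindexing that simultaneously aligns the boosted curvature, the hatted slots, and the auxiliary jump index, verify that all interior remainders and the ``$F_j$-absent'' deficit cancel, and confirm that the only survivors are the two endpoint chains, appearing with exactly the coefficients $1$ and $(-1)^k$ and the constraint $r-k+1$. A careful reading of the summation ranges at the extreme indices $j=2,\,i=1$ and $j=k,\,i=k-1$ is what guarantees that no spurious interior term is left behind.
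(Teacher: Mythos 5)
Your opening round is sound, and it is in fact the same move the paper makes: the two $\eta$-factors in each left-hand summand are adjacent because the hatted indices kill the intervening curvature power, and splitting $\eta^{i+1,i}=\eta^{i+1,i-1}-\eta^{i,i-1}$ (resp.\ $\eta^{j-1,j-2}=\eta^{j,j-2}-\eta^{j,j-1}$) by multilinearity --- which is exactly the content of Lemma \ref{doc} --- reproduces the $t$-sum on the nose and the $m_j\geq 1$ part of the $s$-sum, with the ``$F_j$-absent'' deficit correctly identified. The gap is at precisely the step you flagged as delicate: the remainders do \emph{not} telescope away in one pass. After the split, the two remainder families are $(-1)^{i+1}\sum P(\cdots,F_{i-1}^{m_{i-1}+1},\eta^{i,i-1},\eta^{i+2,i+1},\cdots)$ and $(-1)^{j}\sum P(\cdots,F_{j-3}^{m_{j-3}},\eta^{j-2,j-3},\eta^{j,j-1},F_j^{m_j+1},\cdots)$. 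Setting $i=j-2$ aligns the $\eta$-slots and gives opposite signs, but the boosted curvature power sits in \emph{different} slots ($F_{j-3}^{m_{j-3}+1}$ versus $F_j^{m_j+1}$), so under the constrained sum $\sum_l m_l=r-k$ the pair does not cancel. As the Remark in Section \ref{section 4} records, the difference of two such constrained sums equals the difference of two \emph{restricted} sums in which one additional index ($m_{j-3}$ or $m_j$) is forced to zero; forcing that extra power to zero fuses a third $\eta$ into the adjacent block. What survives your first round is therefore not just the two endpoint chains with coefficients $1$ and $(-1)^k$, but an entire interior identity of the same shape with blocks of three consecutive $\eta$'s --- this is the paper's equation \eqref{3.3}.

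The missing idea is the recursion that the paper builds on top of this first step: one must iterate the Lemma-\ref{doc}-plus-cancellation procedure, each round lengthening the adjacent $\eta$-block by one and regenerating boundary terms (equations \eqref{3.4} and, at general stage $q$, \eqref{3.5}), until at $q=k-1$ the block $\eta^{1,0},\ldots,\eta^{k-1,k-2}$ spans the whole word and the identity collapses to the four-term relation \eqref{3.6}, which is checked directly. Your one-pass telescoping picture is literally correct only for $k=2$ (where the first round already closes by pure additivity of the $\eta$'s); for general $k$ the ``interior remainders'' you assert cancel in pairs instead re-assemble into \eqref{3.3} and must be processed through $k-2$ further rounds. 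So the proposal is not a different route from the paper's --- it is the paper's route truncated after its first step, with the remaining $k-2$ steps replaced by a cancellation claim that is false as stated.
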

\begin{proof}
   Firstly, we apply Lemma \ref{doc}, then combine like terms on both sides of the equation and cancel out the common terms to simplify the above equation.
   Thus, the equation in Lemma \ref{Lemma 4.2} is equal to 
   \begin{align}\label{3.3}
       &\sum\limits_{j=2}^{k-1}(-1)^{j}\sum_{\sum\limits_{l=0}^{k} m_l^{\widehat{j-2,j-1,j}}=r-k}P(F_0^{m_0},\cdots,F_{j-3}^{m_{j-3}},\eta^{j-2,j-3},\cdots,\eta^{j,j-1},F_{j+1}^{m_{j+1}+1},\cdots,F_k^{m_k})\nonumber
       \\&=-\sum_{\sum\limits_{l=0}^{k} m_l^{\widehat{0,1,2}}=r-k+1}P(\eta^{1,0},\eta^{3,2},F_{3}^{m_{3}},\cdots,F_k^{m_k})\nonumber
       \\&+(-1)^k\sum_{\sum\limits_{l=0}^{k} m_l^{\widehat{k-2,k-1,k}}=r-k+1} P(F_0^{m_0},\cdots,F_{k-3}^{m_{k-3}},\eta^{k-2,k-3},\eta^{k-1,k-2})\nonumber
       \\&+\sum\limits_{t=2}^{k-1}(-1)^{t}\sum_{\sum\limits_{l=0}^{k} m_l^{\widehat{t-1,t,t+1}}=r-k}P(F_0^{m_0},\cdots,F_{t-2}^{m_{t-2}+1},\eta^{t-1,t-2},\eta^{t,t-1},\eta^{t+2,t+1},F_{t+2}^{m_{t+2}},\cdots,F_k^{m_k})\nonumber
       \\&+\sum\limits_{s=0}^{k-3}(-1)^{s}\sum_{\sum\limits_{l=0}^{k} m_l^{\widehat{s,s+1,s+2}}=r-k+1}P(F_0^{m_0},\cdots,F_{s-1}^{m_{s-1}},\eta^{s,s-1},\eta^{s+1,s},\eta^{s+3,s+1},F_{s+3}^{m_{s+3}},\cdots,F_k^{m_k}) .    
   \end{align}
Similar procedure gives rise to
   \begin{align}\label{3.4}
    &\sum\limits_{j=3}^{k-1}(-1)^{j}\sum_{\sum\limits_{l=0}^{k} m_l^{\widehat{j-3,\cdots ,j}}=r-k}P(F_0^{m_0},\cdots,F_{j-4}^{m_{j-4}},\eta^{j-3,j-4},\cdots,\eta^{j,j-1},F_{j+1}^{m_j+1},\cdots,F_k^{m_k})\nonumber
   \\&=\sum_{\sum\limits_{l=0}^{k} m_l^{\widehat{0,\cdots,3}}=r-k+1}P(\eta^{1,0},\eta^{2,1},\eta^{4,3},F_{4}^{m_{4}},\cdots,F_k^{m_k})\nonumber
   \\&+(-1)^k\sum_{\sum\limits_{l=0}^{k} m_l^{\widehat{k-3,\cdots ,k}=r-k+1}} P(F_0^{m_0},\cdots,F_{k-4}^{m_{k-4}},\eta^{k-3,k-4},\cdots,\eta^{k-1,k-2})\nonumber
   \\&+\sum\limits_{t=3}^{k-1}(-1)^{t}\sum_{\sum\limits_{l=0}^{k} m_l^{\widehat{t-2,\cdots ,t+1}}=r-k}P(F_0^{m_0},\cdots,F_{t-3}^{m_{t-3}+1},\eta^{t-2,t-3},\eta^{t-1,t-2},\eta^{t,t-1},\eta^{t+2,t+1},F_{t+2}^{m_{t+2}},\cdots,F_k^{m_k})\nonumber
   \\&+\sum\limits_{s=1}^{k-3}(-1)^{s}\sum_{\sum\limits_{l=0}^{k} m_l^{\widehat{s-1,\cdots ,s+2}}=r-k+1}P(F_0^{m_0},\cdots,F_{s-2}^{m_{s-2}},\eta^{s-1,s-2},\eta^{s,s-1},\eta^{s+1,s},\eta^{s+3,s+1},F_{s+3}^{m_{s+3}},\cdots,F_k^{m_k})  .   
\end{align}
The calculation process above is repeated until $j$ starts from $q$ in the summand on the left-hand side. Then, the equation \eqref{3.4} simplifies to the following:
 \begin{align}\label{3.5}
    &\sum\limits_{j=q}^{k-1}(-1)^{j}\sum_{\sum\limits_{l=0}^{k} m_l^{\widehat{j-q,\cdots ,j}}=r-k}P(F_0^{m_0},\cdots,F_{j-q-1}^{m_{j-q-1}},\eta^{j-q,j-q-1},\cdots,\eta^{j,j-1},F_{j+1}^{m_j+1},\cdots,F_k^{m_k})\nonumber
    \\&=(-1)^{q-1}\sum_{\sum\limits_{l=0}^{k} m_l^{\widehat{0,\cdots,q}}=r-k+1,}P(\eta^{1,0},\eta^{2,1},\cdots,\eta^{q-1,q-2},\eta^{q+1,q},F_{q+1}^{m_{q+1}},\cdots,F_k^{m_k})\nonumber
    \\&+(-1)^k\sum_{\sum\limits_{l=0}^{k} m_l^{\widehat{k-q,\cdots,k}}=r-k+1} P(F_0^{m_0},\cdots,F_{k-q-1}^{k-q-1},\eta^{k-q,k-q-1},\cdots,\eta^{k-1,k-2})\nonumber
    \\&+\sum\limits_{t=q}^{k-1}(-1)^{r}\sum_{\sum\limits_{l=0}^{k} m_l^{\widehat{t-q+1,\cdots ,t+1}}=r-k}P(F_0^{m_0},\cdots,F_{t-q}^{m_{t-q}+1},\eta^{t-q+1,t-q},\cdots,\eta^{t,t-1},\eta^{t+2,t+1},F_{t+2}^{m_{t+2}},\cdots,F_k^{m_k})\nonumber
    \\&+\sum\limits_{s=q-2}^{k-3}(-1)^{s}\sum_{\sum\limits_{l=0}^{k} m_l^{\widehat{s-q+2,\cdots ,s+2}}=r-k+1}P(F_0^{m_0},\cdots,F_{s-q+1}^{m_{s-q+1}},\eta^{s-q+2,s-q+1},\cdots,\eta^{s+1,s},\eta^{s+3,s+1},F_{s+3}^{m_{s+3}},\cdots,F_k^{m_k}) .
\end{align}
When $q=k-1$, the equation\eqref{3.5} is reduced to 
\begin{align}\label{3.6}
    &(-1)^{k-1}P(\eta^{1,0},\eta^{2,1},\cdots,\eta^{k-1,k-2},F_k^{r-k+1})
    \nonumber\\ 
    &=(-1)^{k}P(\eta^{1,0},\eta^{2,1},\cdots,\eta^{k,k-1},F_{k}^{r-k+1})
    +(-1)^{k}P(F_0^{r-k+1},\eta^{1,0},\cdots,\eta^{k-1,k-2})\nonumber
    \\&+(-1)^{k-1}P(F_0^{r-k+1},\eta^{1,0},\cdots,\eta^{k-1,k-2})+(-1)^{k-1}P(\eta^{1,0},\eta^{2,1},\cdots,\eta^{k,k-2},F_{k}^{r-k+1}).
\end{align}
Obviously, the equation \eqref{3.6} holds.
\end{proof}

\begin{remark}
 The cancellation process on both sides of the equation involved in the above calculations is based on the following fact:
\begin{align}
   &\sum_{\sum\limits_{l=0}^{k} m_l=r-k} P(F_0^{m_0},\cdots,\eta^{i-1,i-2},F_{i-1}^{m_{i-1}},F_i^{m_i+1},\eta^{i+1,i},\cdots,F_k^{m_k})\\\nonumber
   &-\sum_{\sum\limits_{l=0}^{k} m_l=r-k} P(F_0^{m_0},\cdots,\eta^{i-1,i-2},F_{i-1}^{m_{i-1}+1},F_i^{m_i},\eta^{i+1,i},\cdots,F_k^{m_k})\\\nonumber
   &=\sum_{\sum\limits_{l=0}^{k} m_l^{\widehat{i-1}}=r-k} P(F_0^{m_0},\cdots,\eta^{i-1,i-2},F_{i-1}^{m_{i-1}},F_i^{m_i+1},\eta^{i+1,i},\cdots,F_k^{m_k})\\\nonumber
   &-\sum_{\sum\limits_{l=0}^{k} m_l^{\widehat{i}}=r-k} P(F_0^{m_0},\cdots,\eta^{i-1,i-2},F_{i-1}^{m_{i-1}+1},F_i^{m_i},\eta^{i+1,i},\cdots,F_k^{m_k}).
\end{align}
\end{remark}
 Having established a sequence of Chern-Simons characteristic classes $Q_r^{(k)}$ and $G$-invariant form identity in Lemma \ref{Lemma 4.2}, we present a remarkable property of the $Q$-polynomials in the following theorem:
\begin{theorem} 
   If the $Q$-polynomials are defined as \eqref{Q}, in which $P$ is a $G$-invariant form on the lattice $\Lambda$, then there exists a relation between 
   a $k$-th Q-polynomial and the $(k-1)$-th Q-polynomials as follows:
    \begin{align}\label{THE}
        dQ_r^{(k)}(A_0^{(n)},\cdots,A_i^{(n)},\cdots A_k^{(n)}; \Delta_k)=\Delta Q_r^{(k-1)}(A_0^{(n)},\cdots, A_k^{(n)}; \partial \Delta_k).
    \end{align}
Here $\Delta$ is an operator acting on a polynomial $R^{(k)}(A_0^{(n)},\cdots, A_k^{(n)})$ by 
\begin{align}
    (\Delta R^{(k)})(A_0^{(n)},\cdots,A_i^{(n)},\cdots A_{k+1}^{(n)})=\sum\limits_{i=0}^{k+1}(-1)^{i} R^{(k)}(A_0^{(n)},\cdots,A_{i-1}^{(n)},\hat{A_i}^{(n)},A_{i+1}^{(n)},\cdots, A_{k+1}^{(n)}).
\end{align}
\end{theorem}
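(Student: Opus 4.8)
The plan is to prove \eqref{THE} by expanding each side explicitly and then reducing their equality to the already-established Lemma \ref{Lemma 4.2}. First I would differentiate the left-hand side. Since $Q_r^{(k)}$ in \eqref{Q} is $P$ evaluated on an ordered product whose factors are the two-forms $F_i$ and the one-forms $\eta^{i,i-1}$, and since $P$ is multilinear, the exterior derivative passes through $P$ one slot at a time by the Leibniz rule, each slot being weighted by the sign $(-1)^{\deg(\,\cdot\,)}$ of the forms standing to its left. The Bianchi identity \eqref{Bianchi} gives $dF_i=0$, so $d$ kills every block $F_i^{m_i}$ and acts only on the $k$ one-forms, where $d\eta^{j,j-1}=F_j-F_{j-1}$. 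The total degree to the left of the slot $\eta^{j,j-1}$ equals $2\sum_{l<j}m_l+(j-1)\equiv j-1\pmod 2$, so that slot is differentiated with sign $(-1)^{j-1}$.

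Absorbing the resulting $F_j$ (respectively $-F_{j-1}$) into the neighbouring power $F_j^{m_j}$ (respectively $F_{j-1}^{m_{j-1}}$) turns each differentiated summand into the difference
\begin{align*}
(-1)^{j-1}\Big[P(\dots,F_{j-1}^{m_{j-1}},F_j^{m_j+1},\dots)-P(\dots,F_{j-1}^{m_{j-1}+1},F_j^{m_j},\dots)\Big],
\end{align*}
so that $dQ_r^{(k)}$ becomes a telescoping double sum over $j\in\{1,\dots,k\}$ and over the partitions $\sum_l m_l=r-k$. Every term carries $k-1$ consecutive-type differences (the $j$-th one having been consumed by $d$) together with the shifted $F$-powers.

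Next I would expand the right-hand side from the definition of $\Delta$, namely
\begin{align*}
\Delta Q_r^{(k-1)}(A_0^{(n)},\dots,A_k^{(n)};\partial\Delta_k)=\sum_{i=0}^{k}(-1)^i Q_r^{(k-1)}(A_0^{(n)},\dots,\hat{A_i}^{(n)},\dots,A_k^{(n)}).
\end{align*}
Omitting an endpoint ($i=0$ or $i=k$) simply discards the first or last difference and reindexes; omitting an interior connection $A_i$ replaces the consecutive pair $\eta^{i,i-1},\eta^{i+1,i}$ by the single skip-difference $\eta^{i+1,i-1}=A_{i+1}^{(n)}-A_{i-1}^{(n)}$. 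The decisive observation is the additivity $\eta^{i+1,i-1}=\eta^{i+1,i}+\eta^{i,i-1}$, which together with the multilinearity of $P$ lets one trade skip-differences for consecutive ones; this is exactly the content of Lemma \ref{doc}.

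Finally, after collecting like terms and cancelling the contributions common to both sides (using the fact recorded in the Remark), the identity that remains to be checked is precisely Lemma \ref{Lemma 4.2}: its left-hand side carries the consecutive-difference terms generated by $dQ_r^{(k)}$, while its right-hand side carries the skip-difference terms generated by the coboundary $\Delta Q_r^{(k-1)}$. Since that lemma is already proved --- it itself telescopes through \eqref{3.3}--\eqref{3.6} down to a trivial equality --- relation \eqref{THE} follows. I expect the main obstacle to be purely combinatorial: keeping the signs $(-1)^{j-1}$, the shifted powers $m_l+1$, and the summation ranges consistent while matching the two families of terms, and in particular isolating exactly which combination of skip-difference terms must be rewritten through Lemma \ref{doc}. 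Throughout I would invoke only the Leibniz rule, the Bianchi identity \eqref{Bianchi}, and the multilinearity and additivity of $P$, and never commute two distinct forms past each other, so that --- as emphasized after the Proposition --- all factors stay anchored at the cell $c(n)$ and the coordinate shifts of \eqref{NCDC} never interfere.
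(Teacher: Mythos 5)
Your proposal follows essentially the same route as the paper's own proof: differentiating $Q_r^{(k)}$ via the Leibniz rule and the Bianchi identity \eqref{Bianchi} so that only the slots $\eta^{j,j-1}$ contribute with sign $(-1)^{j-1}$ and $d\eta^{j,j-1}=F_j-F_{j-1}$ absorbed into the neighbouring powers, expanding $\Delta Q_r^{(k-1)}$ from the definition of $\Delta$ with the skip-differences $\eta^{t+1,t-1}$, and then matching the two expansions through Lemma \ref{doc}, the cancellation fact of the Remark, and finally Lemma \ref{Lemma 4.2}. The plan is correct, and the combinatorial bookkeeping you flag (splitting the partition sums by $m_{t-1}=0$ versus $m_{t-1}\neq 0$, and isolating the boundary cases $t=0$, $t=k$) is exactly how the paper carries it out.
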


\begin{proof}
    In order to prove this theorem, we first calculate
    \begin{align*}
   &dP(F_0^{m_0},\cdots,F_{i-1}^{m_{i-1}}\eta^{i,i-1},F_i^{m_i},\cdots,F_k^{m_k})\\
   &=\sum\limits_{i=0}^{k}(-1)^{i}P(F_0^{m_0},\cdots,F_i^{l},dF_i,F_i^{m_i-l-1},\eta^{i+1,i}\cdots,F_k^{m_k})
   \\&+\sum\limits_{i=1}^{k}(-1)^{i-1}P(F_0^{m_0},\cdots,F_{i-1}^{m_{i-1}},d\eta^{i,i-1},F_i^{m_i},\cdots,F_k^{m_k})
   \\&=\sum\limits_{i=1}^{k}(-1)^{i-1}P(F_0^{m_0},\cdots,F_{i-1}^{m_{i-1}},F_i-F_{i-1},F_i^{m_i},\cdots,F_k^{m_k})
   \\&=\sum\limits_{i=1}^{k}(-1)^{i-1}P(F_0^{m_0},\cdots,\eta^{i-1,i-2},F_{i-1}^{m_{i-1}},F_i^{m_i+1},\eta^{i+1,i},\cdots,F_k^{m_k})
   \\&+\sum\limits_{i=1}^{k}(-1)^{i}P(F_0^{m_0},\cdots,\eta^{i-1,i-2},F_{i-1}^{m_{i-1}+1},F_i^{m_i},\eta^{i+1,i},\cdots,F_k^{m_k}),
\end{align*}
by using the Bianchi identity \eqref{Bianchi}.

It should be noted that we start from the above indicator of $\eta$, which is greater than or equal to zero  and  the cut-off is greater than $k$. For example, in the first term, when $i=1$, it is equals to $P(F_0^{m_0},F_1^{m_1+1},\eta^{2,1},\cdots,F_k^{m_k})$; when $i=k$, it is equals to $P(F_0^{m_0},\cdots,\eta^{k-1,k-2},F_{k-1}^{m_{k-1}},F_k^{m_k+1})$. We follow this approach in this paper.

\begin{align*}
   &dQ_r^{(k)}(A_0^{(n)},\cdots,A_i^{(n)},\cdots A_k^{(n)}; \Delta_k)\\
   &=\sum_{\sum\limits_{l=0}^{k} m_l=r-k}dP(F_0^{m_0},\cdots,F_{i-1}^{m_{i-1}},\eta^{i,i-1},F_i^{m_i},\cdots,F_k^{m_k})
   \\&=\sum\limits_{i=1}^{k}(-1)^{i-1}\sum_{\sum\limits_{l=0}^{k} m_l=r-k}P(F_0^{m_0},\cdots,F_{i-1}^{m_{i-1}},F_i^{m_i+1},\cdots,F_k^{m_k})
   \\&+\sum\limits_{i=1}^{k}(-1)^{i}\sum_{\sum\limits_{l=0}^{k} m_l=r-k}P(F_0^{m_0},\cdots,F_{i-1}^{m_{i-1}+1},F_i^{m_i},\cdots,F_k^{m_k})
   \\&=\sum\limits_{i=1}^{k}(-1)^{i-1}\sum_{\sum\limits_{l=0}^{k} m_l^{\widehat{i-1}}=r-k}P(F_0^{m_0},\cdots,\eta^{i-1,i-2},F_i^{m_i+1},\eta^{i+1,i},\cdots,F_k^{m_k})
   \\&+\sum\limits_{i=1}^{k} (-1)^{i}\sum_{\sum\limits_{l=0}^{k} m_l^{\hat{i}}=r-k} P(F_0^{m_0},\cdots,\eta^{i-1,i-2},F_{i-1}^{m_{i-1}+1},\eta^{i+1,i},\cdots,F_k^{m_k})
   \\&=\sum_{\sum\limits_{l=0}^{k} m_l^{\hat{0}}=r-k} P(F_1^{m_1+1},\eta^{2,1},\cdots F_k^{m_k})+(-1)^k\sum_{\sum\limits_{l=0}^{k} m_l^{\hat{k}}=r-k} P(F_0^{m_0},\cdots,\eta^{k-1,k-2},F_{k-1}^{m_{k-1}+1})
   \\&+\sum\limits_{s=1}^{k-1}(-1)^{s}\sum_{\sum\limits_{l=0}^{k} m_l^{\hat{s},\widetilde{s+1}}=r-k}P(F_0^{m_0},\cdots,F_{s-1}^{m_{s-1}+1},\eta^{s+1,s-1},F_{s+1}^{m_{s+1}},\cdots,F_k^{m_k})
   \\&+\sum\limits_{j=2}^{k}(-1)^{j-1}\sum_{\sum\limits_{l=0}^{k} m_l^{\widehat{j-2,j-1}}=r-k}P(F_0^{m_0},\cdots,F_{j-3}^{m_{j-3}},\eta^{j-2,j-3},\eta^{j-1,j-2},F_j^{m_j+1},\cdots,F_k^{m_k})
   \\&+\sum\limits_{i=1}^{k-1}(-1)^{i}\sum_{\sum\limits_{l=0}^{k} m_l^{\widehat{i,i+1}}=r-k}P(F_0^{m_0},\cdots,F_{i-1}^{m_{i-1}+1},\eta^{i+1,i},\eta^{i+2,i+1},F_{i+2}^{m_{i+2}},\cdots,F_k^{m_k}),
    \end{align*}
 where $m^{\widetilde{s+1}}$ means $\mathop{m_{l}|}\nolimits_{l=s+1} \neq 0$. The final step is obtained by proposing $i=1$ in the first term and $i=k$ in the second term, so that the remaining part of the first term $j=i-1$ is merged with the remaining part of the second term.

Based on the definition of $Q$-polynomials and the action of the operator $\Delta$, we can obtain
\begin{align*}
    &\Delta Q_r^{(k-1)}(A_0^{(n)},\cdots, A_k^{(n)};\partial \Delta_k)\\
    &=\sum\limits_{t=0}^{k}(-1)^{t} Q_r^{(k-1)}(A_0^{(n)},\cdots,A_{t-1}^{(n)},\hat{A_t}^{(n)},A_{t+1}^{(n)},\cdots, A_k^{(n)})
    \\&=\sum\limits_{t=0}^{k}(-1)^{t}\sum_{\sum\limits_{l=0}^{k} m_l^{\hat{t}}=r-k+1}P(F_0^{m_0},\cdots,F_{t-1}^{m_{t-1}},\eta^{t+1,t-1},F_{t+1}^{m_{t+1}},\cdots,F_k^{m_k})
    \\&=\sum_{\sum\limits_{l=0}^{k} m_l^{\hat{0}}=r-k+1}P(F_1^{m_1},\eta^{2,1},F_{2}^{m_{2}},\cdots,F_k^{m_k})+(-1)^k\sum_{\sum\limits_{l=0}^{k} m_l^{\hat{k}}=r-k+1} P(F_0^{m_0},\cdots,\eta^{k-1,k-2},F_{k-1}^{m_{k-1}})
   \\&+\sum\limits_{t=1}^{k-1}(-1)^{t}\sum_{\sum\limits_{l=0}^{k} m_l^{\hat{t}}=r-k+1}P(F_0^{m_0},\cdots,F_{t-1}^{m_{t-1}},\eta^{t+1,t-1},F_{t+1}^{m_{t+1}},\cdots,F_k^{m_k})
    \\&=\sum_{\sum\limits_{l=0}^{k} m_l^{\hat{0}}=r-k}P(F_1^{m_1+1},\eta^{2,1},F_{2}^{m_{2}},\cdots,F_k^{m_k})+\sum_{\sum\limits_{l=0}^{k} m_l^{\widehat{0,1}}=r-k+1}P(\eta^{2,1},F_{2}^{m_{2}},\cdots,F_k^{m_k})
    \\&+(-1)^k\sum_{\sum\limits_{l=0}^{k} m_l^{\hat{k}}=r-k} P(F_0^{m_0},\cdots,\eta^{k-1,k-2},F_{k-1}^{m_{k-1}+1})+(-1)^k\sum_{\sum\limits_{l=0}^{k-1} m_l^{\widehat{k-1,k}}=r-k+1} P(F_0^{m_0},\cdots,F_{k-2}^{m_{k-2}},\eta^{k-1,k-2})
    \\&+\sum\limits_{t=1}^{k-1}(-1)^{t}\sum_{\sum\limits_{l=0}^{k} m_l^{\hat{t},\widetilde{t+1}}=r-k}P(F_0^{m_0},\cdots,F_{t-1}^{m_{t-1}+1},\eta^{t+1,t-1},F_{t+1}^{m_{t+1}},\cdots,F_k^{m_k})
    \\&+\sum\limits_{t=1}^{k-1}(-1)^{t}\sum_{\sum\limits_{l=0}^{k} m_l^{\widehat{t,t+1}}=r-k}P(F_0^{m_0},\cdots,F_{t-1}^{m_{t-1}+1},\eta^{t+1,t-1},\eta^{t+2,t+1},F_{t+2}^{m_{t+2}},\cdots,F_k^{m_k})
    \\&+\sum\limits_{t=1}^{k-1}(-1)^{t}\sum_{\sum\limits_{l=0}^{k} m_l^{\widehat{t-1,t}}=r-k+1}P(F_0^{m_0},\cdots,F_{t-2}^{m_{t-2}},\eta^{t-1,t-2},\eta^{t+1,t-1},F_{t+1}^{m_{t+1}},\cdots,F_k^{m_k})
\end{align*}

In the above calculation, we mainly decompose $\sum\limits_{l=0}^{k} m_l^{\hat{t}}=r-k+1$ into two main scenarios: $m_{t-1}=0$ and $m_{t-1} \neq 0$. For $m_{t-1} = 0$, we have $\sum\limits_{l=0}^{k} m_l^{\hat{t}}=r-k+1$, which implies $\sum\limits_{l=0}^{k} m_l^{\widehat{t-1,t}}=r-k+1$. For $m_{t-1} \neq 0$, we shift $m_{t-1}$ to $m_{t-1}-1$, so $\sum\limits_{l=0}^{k} m_l^{\widetilde{t-1},\hat{t}}=r-k+1$ becomes $\sum\limits_{l=0}^{k} m_l^{\hat{t}}=r-k$.

Thus, by means of Lemma \ref{Lemma 4.2}, we get relation \eqref{THE}. This completes the proof of the theorem. 
\end{proof}
Above, we discussed Chern-Simons type characteristic classes on the lattice. Now, we turn our attention to their relationship with the continuous counterpart. The following property shows that the $Q$-polynomials \eqref{doc} on the lattice are consistent with the conclusion of the Chern-Simons type characteristic classes in \cite{han1985chern} in the continuous limit.

\begin{proposition}
    With $k+1$ connections $A_0,\cdots,A_k$, we define 
    \begin{align}
        A_{0;t_1\cdots,t_k}=A_0+\sum\limits_{i=1}^{k}t_i\eta^{i,0}, 
    \end{align}
in which $\eta^{i,0}=A_i-A_0$. The curvature of $A_{0;t_1\cdots,t_k}$ is
\begin{align}
     F_{0;t_1 \cdots t_k}=dA_{0;t_1 \cdots t_k}.
\end{align}
Then, in continuous case, we have
     \begin{align}
         &\sum_{\sum\limits_{l=0}^{k} m_l=r-k}P(\eta^{1,0},\cdots,\eta^{k,k-1},F_0^{m_0},\cdots,F_k^{m_k})\nonumber\\
         &=\frac{r!}{(r-k)!} \int_{\Delta_{k}} dt_1 \wedge \cdots \wedge dt_k P(\eta^{1,0},\cdots,\eta^{k,0},F_{0,t_1 \cdots t_k}^{r-k}),
     \end{align}
\end{proposition}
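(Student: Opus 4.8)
The plan is to evaluate the right-hand integral in \eqref{Q-p} explicitly and match it term by term against the left-hand sum; the only inputs beyond elementary calculus are the graded symmetry of $P$ (the continuous limit of the commutation law established in Section \ref{section 3}) and the Abelian hypothesis $F=dA$.

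First I would reconcile the two families of differences. The left-hand summand carries the telescoping forms $\eta^{1,0},\eta^{2,1},\dots,\eta^{k,k-1}$, while the integrand carries the base-point forms $\eta^{1,0},\eta^{2,0},\dots,\eta^{k,0}$. Since $\eta^{i,0}=\sum_{j=1}^{i}\eta^{j,j-1}$, I would substitute this into each slot of $P(\eta^{1,0},\dots,\eta^{k,0},\Omega)$ and expand by multilinearity for a \emph{fixed} curvature monomial $\Omega=F_0^{m_0}\cdots F_k^{m_k}$. Every term in which some $\eta^{j,j-1}$ occupies two distinct slots vanishes: each $\eta$ is a $1$-form, so swapping two equal odd-degree arguments contributes a sign $(-1)^{1\cdot 1}=-1$, forcing such a term to equal its own negative. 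Collapsing the slots one at a time leaves exactly $P(\eta^{1,0},\eta^{2,1},\dots,\eta^{k,k-1},\Omega)$, so the two $\eta$-families are interchangeable inside each summand and the telescoping left-hand side equals $\sum_{\sum m_l=r-k}P(\eta^{1,0},\dots,\eta^{k,0},F_0^{m_0},\dots,F_k^{m_k})$.

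Next I would use that in the Abelian theory $F=dA$ carries no $A\wedge A$ term and $d$ is linear, so the interpolated curvature is a \emph{convex combination}
\[
F_{0;t_1\cdots t_k}=dA_0+\sum_{i=1}^{k}t_i(F_i-F_0)=\sum_{i=0}^{k}t_iF_i,\qquad t_0:=1-\sum_{i=1}^{k}t_i,
\]
with $\sum_{i=0}^{k}t_i=1$ and all $t_i\ge 0$ on $\Delta_k$. Substituting into $F_{0;t_1\cdots t_k}^{\,r-k}$ and expanding $P(\eta^{1,0},\dots,\eta^{k,0},F^{r-k})$ by multilinearity, together with the free commutativity of the $2$-form slots (swapping two $F$'s gives $(-1)^{2\cdot 2}=+1$), gives the multinomial expansion
\[
P(\eta^{1,0},\dots,\eta^{k,0},F_{0;t_1\cdots t_k}^{\,r-k})=\!\!\sum_{\sum_i m_i=r-k}\!\!\frac{(r-k)!}{m_0!\cdots m_k!}\Big(\prod_{i=0}^{k}t_i^{m_i}\Big)\,P(\eta^{1,0},\dots,\eta^{k,0},F_0^{m_0},\dots,F_k^{m_k}),
\]
where the multinomial coefficient counts the orderings of the curvature arguments producing the same $P$-value.

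Finally I would integrate over the simplex. The only nontrivial integral is the Dirichlet integral $\int_{\Delta_k}\prod_{i=0}^{k}t_i^{m_i}\,dt_1\cdots dt_k=\frac{m_0!\cdots m_k!}{(\sum_i m_i+k)!}=\frac{m_0!\cdots m_k!}{r!}$, using $\sum_i m_i=r-k$. The factorials cancel the multinomial coefficient, the prefactor $r!/(r-k)!$ cancels the surviving $1/r!$ and restores $(r-k)!$, and what remains is precisely $\sum_{\sum m_i=r-k}P(\eta^{1,0},\dots,\eta^{k,0},F_0^{m_0},\dots,F_k^{m_k})$, which by the first step equals the stated left-hand side. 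I expect the main obstacle to be purely organizational: keeping the multinomial coefficient synchronized with the Dirichlet integral so the factorials cancel exactly, and arguing carefully that the $\eta$-telescoping in the first step annihilates \emph{every} repeated-argument contribution rather than only some of them.
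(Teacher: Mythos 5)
Your proof is correct and follows essentially the same route as the paper's: writing $F_{0;t_1\cdots t_k}=\sum_{i=0}^{k}t_iF_i$ as a convex combination with $t_0=1-\sum_{i=1}^{k}t_i$, expanding $F_{0;t_1\cdots t_k}^{\,r-k}$ multinomially inside $P$, and cancelling the multinomial coefficients against the Dirichlet (generalized Beta) integral over $\Delta_k$. The only difference is that you explicitly justify the final identification $P(\eta^{1,0},\cdots,\eta^{k,0},F_0^{m_0},\cdots,F_k^{m_k})=P(\eta^{1,0},\cdots,\eta^{k,k-1},F_0^{m_0},\cdots,F_k^{m_k})$ via the telescoping substitution $\eta^{i,0}=\sum_{j=1}^{i}\eta^{j,j-1}$ and the vanishing of terms with a repeated odd-degree argument (which forces $j_i=i$ in every surviving term), a step the paper asserts without comment in the last line of its computation.
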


\begin{proof}
    \begin{align}
    F_{0;t_1 \cdots t_k}=dA_{0,t_1 \cdots t_k}=F_0+\sum\limits_{i=1}^{k}t_i d\eta^{i,0}=(1-\sum\limits_{i=1}^{k}t_i)F_0+\sum\limits_{i=1}^{k}t_i F_i=\sum\limits_{i=0}^{k}t_i F_i,
\end{align}
with $t_0=1-\sum\limits_{i=1}^{k}t_i$.
   Thus,
   \begin{align}
    &\frac{r!}{(r-k)!} \int_{\Delta_{k}} dt_1 \wedge \cdots \wedge dt_k P(\eta^{1,0},\cdots,\eta^{k,0},F_{0;t_1 \cdots t_k}^{r-k})\nonumber\\
    &=\frac{r!}{(r-k)!} \int_{\Delta_{k}} dt_1 \wedge \cdots \wedge dt_k P\Big(\eta^{1,0},\cdots,\eta^{k,0},\sum_{\sum\limits_{i=0}^{k}m_i=r-k}\frac{(r-k)!}{m_0 !\cdots m_k !}(t_i F_i)^{m_i}\Big)\nonumber\\
    &=\sum_{\sum\limits_{i=0}^{k}m_i=r-k}\frac{r!}{m_0 !\cdots m_k !}\int_{\Delta_{k}} t_0^{m_0} \cdots t_k^{m_k} dt_1 \wedge \cdots \wedge dt_k   P(\eta^{1,0},\cdots,\eta^{k,0},F_0 ^{m_0},\cdots, F_k ^{m_k})\nonumber\\
    &=\sum_{\sum\limits_{i=0}^{k}m_i=r-k}\frac{r!}{m_0 !\cdots m_k !}\int_{0}^{1}dt_1 \int_{0}^{1-t_1} dt_2 \cdots \int_{0}^{1-t_1-\cdots t_{k-1}} dt_k t_0^{m_0} \cdots t_k^{m_k}  P(\eta^{1,0},\cdots,\eta^{k,0},F_0 ^{m_0},\cdots, F_k ^{m_k})\nonumber\\
    &=\sum_{\sum\limits_{i=0}^{k} m_i=r-k}P(\eta^{1,0},\cdots,\eta^{k,0},F_0^{m_0},\cdots,F_k^{m_k})\nonumber\\
    &=\sum_{\sum\limits_{l=0}^{k} m_i=r-k}P(\eta^{1,0},\cdots,\eta^{k,k-1},F_0^{m_0},\cdots,F_k^{m_k}).
     \end{align}
In the previous derivation, we used the generalized Beta function
 \begin{align}
     B(m_1+1,\cdots,m_k+1)&=\int_{\Delta_{k}} (1-t_1-\cdots-t_k)^{m_0} t_1^{m_1} \cdots t_k^{m_k} dt_1 \wedge \cdots \wedge dt_k\\\nonumber
     &=\frac{\prod _{i=0}^k \Gamma (m_i+1)}{\Gamma (\sum_{i=0}^{k}m_i +k+1)}.
     \end{align}
 $\Gamma(x)$ is the Gamma function, which satisfies $\Gamma(n+1)=n!$  for all $ n \in \mathbb{N}^+$.
\end{proof}
In the case of $k=1$, the definition \eqref{Q} tells us 
\begin{equation}
\left\{\begin{aligned}
   &Q_r^{(0)}(A_0^{(n)};\Delta_0)=P(F_0^{r}),\\
   &\\
  &Q_r^{(1)}(A_0^{(n)}, A_1^{(n)};\Delta_1)=\sum\limits_{m_0=0}^{r-1} P(F_0^{m_0},\eta^{1,0},F_1^{r-1-m_0}).  
\end{aligned}\right.
\end{equation}
Then, \eqref{THE} gives
\begin{align}
   P(F_1^{r})-P(F_0^{r})=dQ_r^{(1)}(A_0^{(n)}, A_1^{(n)};\Delta_1).
\end{align}
When $r=2,A_0^{(n)}=0, A_1^{(n)}=A^{(n)}$, we have
\begin{align}
     P(F^{2})=dQ_2^{(1)}(0, A^{(n)};\Delta_1),\quad Q_2^{(1)}(0, A^{(n)};\Delta_1)=P(A^{(n)},F),
\end{align}
where $P(A^{(n)},F)$ is the abelian Chern-Simons form on the cell $c(n)$.

In the case $k=2$, the theorem\eqref{THE} reads
\begin{align}\label{CS tran}
    Q_r^{(1)}(A_1^{(n)}, A_2^{(n)};\Delta_1)-Q_r^{(1)}(A_0^{(n)}, A_2^{(n)};\Delta_1)+Q_r^{(1)}(A_0^{(n)}, A_1^{(n)};\Delta_1)=dQ_r^{(2)}(A_0^{(n)}, A_1^{(n)},A_2^{(n)};\Delta_2).
\end{align}
The $ Q_r^{(1)}$ are defined as above, and $Q_r^{(2)}(A_0^{(n)}, A_1^{(n)},A_2^{(n)};\Delta_2)$ is
\begin{align}
    Q_r^{(2)}(A_0^{(n)}, A_1^{(n)},A_2^{(n)};\Delta_2) =\sum_{\sum\limits_{l=0}^{2} m_l=r-2}P(F_0^{m_0},\eta^{1,0},F_1^{m_1},\eta^{2,1},F_2^{m_2}).
\end{align}
Consider $r=2, A_0^{(n)}=0, A_1^{(n)}=A^{(n)}, A_2^{(n)}=A^{(n)}-d\lambda(n)-2\pi dN(n)$, such that $N(n) \in \mathbb{Z}$ \cite{fujiwara2001topological}. The \eqref{CS tran} is equal to
\begin{align}
   CS(A_2^{(n)})-CS(A^{(n)})
    &=dP(A^{(n)},d\lambda(n)+2\pi dN(n))+ Q_2^{(1)}(A_1^{(n)}, A_2^{(n)})\\\nonumber
    &=d(P(d\lambda(n),A^{(n)})+2\pi P(dN(n),A^{(n)})).
\end{align}
The $P(d\lambda(n),A^{(n)})$  is the abelian Wess-Zumino action on the cell $c(n)$.

Thus far, we have focused on the Abelian lattice gauge theory. This naturally leads to the question: Can the same construction be applied to the Chern-Simons type characteristic classes in the non-Abelian lattice gauge theory? It is worth pointing out that the second Chern form is no longer a closed form  when we consider non-Abelian gauge field theory. This implies that new methods need to be explored.
\begin{align}
    dP(F,F)&=P(dF,F)+P(F,dF)\\\nonumber
    &=-P(A,F,F)+P(F,A,F)-P(F,A,F)+P(F,F,A)\\\nonumber
    &=-P(A,F,F)+P(F,F,A)\\\nonumber
    &\neq0.
\end{align}

\section{Conclusion and outlook}
In this paper, we construct the $G$-invariant form on the lattice and derive its commutation law, then we define the Chern-Simons type characteristic classes on the lattice. We prove the main result that  the exterior differential of the $k$-th Chern-Simons type characteristic class is exactly equal to 
the coboundary of the cochain of the $(k-1)$-th Chern-Simons type 
characteristic class, and demonstrate that it is consistent with the results from previous work in the continuous limit.

There are many issues worth discussing. Given the development of gauge theory into 2-gauge theory, an interesting question is whether Chern-Simons type characteristic classes can be extended to 2-Chern-Simons classes. Another problem is extending the conclusions of this paper to the 2-lattice gauge theory. Moreover, Chern-Simons type characteristic classes for non-Abelian lattice gauge theory are an important and intriguing issue to explore.

\section *{Acknowledgment}
  
  This work is supported by the National Natural Science Foundation of China (Nos.11971322).
  
  The authors gratefully acknowledge the editors and reviewers for their insightful comments and the contributions made to revising the manuscript.

\end{document}